\documentclass[letterpaper, 10 pt, journal, twoside]{ieeeconf}  
\IEEEoverridecommandlockouts
\usepackage{graphics}
\usepackage{epsfig}
\usepackage{amsmath}
\usepackage{amssymb}

\usepackage{amsthm}
\usepackage{mathabx}
\usepackage{algorithm}
\usepackage{algpseudocode}
\usepackage{url}
\usepackage{booktabs}
\usepackage{xcolor}

\newtheorem{theorem}{Theorem}[section]

\theoremstyle{definition}
\newtheorem{definition}[theorem]{Definition}
\newtheorem{assumption}[theorem]{Assumption}

\theoremstyle{remark}
\newtheorem{remark}[theorem]{Remark}

\title{\LARGE \bf
Distributed Predictive Control Barrier Functions: Towards Scalable Safety Certification in Modular Multi-Agent Systems
}

\author{Jonas Ohnemus$^{1}$, Alexandre Didier$^{1}$, Ahmed Aboudonia$^{2}$, Andrea Carron$^{1}$, and Melanie N. Zeilinger$^{1}$
\thanks{This work was supported by the Swiss National Science Foundation under grant agreement P500PT\_230223.}
\thanks{$^{1}$Institute for Dynamic Systems and Control, ETH Zurich, 8092 Zurich, Switzerland,        {\tt\small \{johnemus, adidier, carrona, mzeilinger\}@ethz.ch}}
\thanks{$^{2}$University of California Berkeley, 94720 CA, USA {\tt\small aboudonia@berkeley.edu}}%
}

\allowdisplaybreaks

\begin{document}

\maketitle
\thispagestyle{empty}
\pagestyle{empty}

%
%
%
%
%
\begin{abstract}
We consider safety-critical multi-agent systems with distributed control architectures and potentially varying network topologies. While learning-based distributed control enables scalability and high performance, a lack of formal safety guarantees in the face of unforeseen disturbances and unsafe network topology changes may lead to system failure.
To address this challenge, we introduce \emph{structured control barrier functions (s-CBFs)} as a multi-agent safety framework. The s-CBFs are augmented to a \emph{distributed predictive control barrier function (D-PCBF)}, a predictive, optimization-based safety layer that uses model predictions to guarantee recoverable safety at all times. 
The proposed approach enables a permissive yet formal plug-and-play protocol, allowing agents to join or leave the network while ensuring safety recovery if a change in network topology requires temporarily unsafe behavior. We validate the formulation through simulations and real-time experiments of a miniature race-car platoon.
\end{abstract}

\section{INTRODUCTION}
\label{sec:intro}
%
%
%
%
Modern robotic systems, such as connected autonomous vehicle fleets~\cite{caruntuDistributedModelPredictive2016}, aerial swarms~\cite{dingDistributedMachineLearning2024}, and collaborative industrial manipulators~\cite{wuDistributedCooperativeControl2023}, are increasing in scale and complexity, necessitating appropriate control architectures~\cite{scattoliniArchitecturesDistributedHierarchical2009}. As these systems scale, centralized control becomes computationally intractable, motivating a shift toward distributed architectures~\cite{geDistributedNetworkedControl2017}.
While data-driven methods such as multi-agent reinforcement learning have emerged as powerful tools for achieving high-level performance in these applications~\cite{tangDeepReinforcementLearning2024}\cite{zhangMultiAgentReinforcementLearning2021}, the learned policies generally lack formal safety guarantees and are difficult to certify. As a result, these policies are unsuitable for direct deployment in safety-critical domains.

To bridge the gap between safety and performance, \emph{safety filters} offer a modular solution~\cite{hsuSafetyFilterUnified2024}. These filters minimally modify a potentially unsafe candidate input (e.g., from a learning-based controller) to ensure safety. By leveraging model predictions, \emph{predictive safety filters} have successfully enabled aggressive maneuvers, by enforcing safety constraints only when necessary~\cite{tearlePredictiveSafetyFilter2021}. Predictive safety filters, however, merely enforce positive invariance of the safe constraint set, and the optimization may become infeasible due to disturbances that drive the state outside this safe set~\cite{wabersichLinearModelPredictive2018}. A common remedy is to relax constraints via soft-penalty formulations~\cite{kerriganSoftConstraintsExact2000}. While this approach improves feasibility, it sacrifices formal guarantees. Soft-constrained methods lack a rigorous mechanism to ensure the system eventually returns to the safe set, possibly leading to indefinite constraint violations. Recently, \emph{predictive control barrier functions} (PCBFs) have been proposed in the single-agent setting to extend the feasible region while preserving both invariance (safety) and convergence (recovery) properties~\cite{wabersichPredictiveControlBarrier2023}.

While PCBFs offer a robust solution for individual agents, maintaining these benefits in multi-agent systems with distributed learning-based control architectures remains a significant challenge. Issues of infeasibility are further compounded by the modularity of modern multi-agent systems, in which subsystems may join, leave, or reconfigure during operation. Classical approaches address the challenge of varying topology by steering the system to a specific steady state before allowing reconfiguration~\cite{zeilingerPlugPlayDistributed2013}. However, this approach is time-consuming and diverts the system from its primary operation. On the other hand, existing distributed predictive safety filters~\cite{larsenSafeLearningDistributed2017, carronDistributedSafeLearning2021, muntwilerDistributedModelPredictive2020} or controllers~\cite{conteDistributedSynthesisStability2016} may fail during modular network reconfiguration because the corresponding optimization problems become infeasible.

These limitations highlight a need for recoverable safety: safety notions that not only ensure invariance of the safe set but also guarantee convergence back to it. To address this open challenge, we propose a novel PCBF-based approach for multi-agent systems.
\begin{figure}[t]
    \centering
    \includegraphics[width=.85\linewidth]{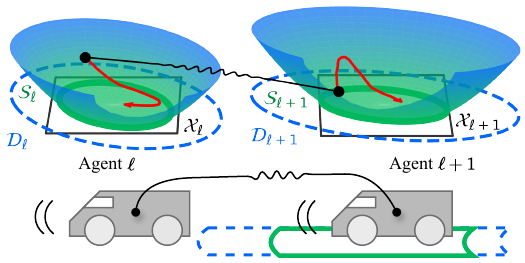}
    \vspace{-0.5em}
    \caption{Network-level recoverable safety using s-CBFs. The formulation defines safe sets $\mathcal{S}_\ell$ and regions of attraction $\mathcal{D}_\ell\setminus \mathcal{S}_\ell$, ensuring constraint satisfaction and network-level convergence even when individual trajectories temporarily have to violate safety.}
    \label{fig:sCBF_principle}
    \vspace{-1.3em}
\end{figure}
We address the safety-performance gap in multi-agent systems by developing a \emph{distributed predictive control barrier function (D-PCBF)} framework. Our approach certifies both invariance and convergence using only local information exchange. The specific contributions are: 1) We introduce \emph{structured CBFs (s-CBFs)} (see Fig.~\ref{fig:sCBF_principle}), a novel formulation that enables permissive global safety recovery using only local neighbor information. 2) We effectively enlarge the offline computed safe set and region of attraction of the s-CBFs using an online D-PCBF distributed optimization that serves as a less conservative safety and recovery certificate. 3) We design a \emph{Plug-and-Play (PnP) protocol} that leverages the slack variables from the D-PCBF to certify the admission or removal of agents online without requiring the system to disrupt its operation. 4) We validate the framework in simulation and on a \emph{real-time miniature vehicle platoon}, demonstrating that our D-PCBF successfully filters unsafe inputs from aggressive controllers and ensures collision-free recovery during high-speed plug-and-play maneuvers where existing approaches would fail.

Section~\ref{sec:prob_form} details the general problem formulation. Section~\ref{sec:dpcbfs} introduces s-CBFs and defines the D-PCBF, before Section~\ref{sec:pnp} outlines the PnP protocol. After the synthesis of s-CBFs for linear systems is described in Section~\ref{sec:sCBFs_synthesis}, simulation and real-time experimental results are presented in Section~\ref{sec:results}. Finally, Section~\ref{sec:conclusion} concludes the paper.

\section{PROBLEM FORMULATION}
\label{sec:prob_form}
We consider multi-agent dynamical systems that can be decomposed into individual dynamically coupled subsystems (or agents). We denote the index set of subsystems by $\mathcal{L}$ and the number of subsystems by $L$. The global nonlinear discrete-time dynamics are
\begin{equation}
    x(t+1)  = f\left(x(t),\, u(t)\right),\; t\in\mathbb{N}_0, \label{eq:system}    
\end{equation}
where $x\in\mathbb{R}^n$ and $u\in\mathbb{R}^m$ are the global state and input vectors, $t$ denotes time and $f:\mathbb{R}^n\times \mathbb{R}^m \to \mathbb{R}^n$ is a continuous function. We further assume that the local dynamics of the $\ell$-th subsystem with $\ell\in\mathcal{L}$ are given by
\begin{equation}
    x_\ell(t+1)  = f_\ell\left(x_{\mathcal{N}_\ell}(t),\, u_\ell(t)\right) \quad \forall \ell \in\mathcal{L}, \label{eq:dist_system}
\end{equation}
where $x_\ell \in\mathbb{R}^{n_\ell}$ and $u_\ell\in\mathbb{R}^{m_\ell}$ are the local state and input vectors, $f_\ell:\mathbb{R}^{n_{\mathcal{N}_\ell}} \times \mathbb{R}^{m_\ell}\to\mathbb{R}^{n_\ell}$ is a continuous function, and the set $\mathcal{N}_\ell$ denotes the indices of neighboring subsystems including the $\ell$-th subsystem itself, i.e., $\ell\in\mathcal{N}_\ell$. The vector $x_{\mathcal{N}_\ell}= \mathrm{col}_{j\in\mathcal{N}_{\ell}}(x_j)\in\mathbb{R}^{n_{\mathcal{N}_\ell}}$ denotes the concatenation of neighbor states. The global state and input are the concatenation of all local states and inputs, respectively, i.e., $x = \mathrm{col}_{\ell\in\mathcal{L}}(x_\ell)$ and $u = \mathrm{col}_{\ell\in\mathcal{L}}(u_\ell)$. We also assume that communication is bidirectional between neighboring subsystems and that the communication graph is connected. The $\ell$-th subsystem is subject to the local constraints
\begin{align}
    u_\ell & \in\mathcal{U}_\ell \subset \mathbb{R}^{m_\ell}, \; \forall \ell \in\mathcal{L}, \nonumber\\
    x_\ell & \in\mathcal{X}_\ell :=\{ x_\ell \mid c_\ell(x_\ell) \leq 0 \} \subset \mathbb{R}^{n_\ell}, \; \forall \ell \in\mathcal{L}, \nonumber
\end{align}
where $\mathcal{U}_\ell$ and $\mathcal{X}_\ell$ are compact sets, and $c_\ell:\mathbb{R}^{n_\ell}\to\mathbb{R}^{n_{x,\ell}}$ is a continuous function. The global constraint sets are given by $\mathcal{X}=\mathcal{X}_1 \times \ldots \times \mathcal{X}_L$ and $\mathcal{U}=\mathcal{U}_1 \times \ldots \times \mathcal{U}_L$.

We are interested in the recoverable safety of dynamical systems, which is formally defined by the existence of a safe set of states $\mathcal{S}\subseteq \mathcal{X}$ that is control invariant and can be rendered attractive from a larger region $\mathcal{D}\setminus\mathcal{S}$, where $\mathcal{S}\subset\mathcal{D}$. The following definition characterizes the safe set as the sublevel set of a continuous function $h(\cdot)$.
\begin{definition}[Control barrier function (CBF) {\cite[Def. III.1]{wabersichPredictiveControlBarrier2023}}]
    A function $h:\mathcal{D}\rightarrow \mathbb{R}$ is a CBF for system \eqref{eq:system} with safe set $\mathcal{S}=\{ x \in \mathbb{R}^n \; \vert \; h(x) \leq 0 \}\subseteq\mathcal{X}$ where $\mathcal{S}\subset \mathcal{D}$ if $\mathcal{S}$ and $\mathcal{D}$ are non-empty and compact, $h$ is continuous on $\mathcal{D}$, and there exists a continuous function $\Delta h:\mathcal{D} \rightarrow \mathbb{R}$ with $\Delta h(x) > 0$ for all $x \in \mathcal{D} \setminus \mathcal{S}$ such that
    \begin{align}
        \forall x \in \mathcal{D} \setminus \mathcal{S}:& \quad \inf_{u\in\mathcal{U}} \; h(f(x,u)) - h(x) \leq - \Delta h(x), \nonumber \\
        \forall x \in \mathcal{S}:& \quad \inf_{u\in\mathcal{U}} \; h(f(x,u)) \leq 0. \nonumber
    \end{align}
    \label{def:dtcbf}
    \vspace{-1em}
\end{definition}
CBFs thus encode recoverable safety: While $h(x)\leq 0$ implies that the constraints $\mathcal{X}$ can be satisfied via the invariance of $\mathcal{S}\subseteq \mathcal{X}$, they also provide a quantitative violation measure whenever $h(x)>0$ and a recovery guarantee that the system converges to $\mathcal S$ from $\mathcal D \setminus \mathcal S$ \cite[Theorem III.4]{wabersichPredictiveControlBarrier2023}. 
While $\mathcal{D}$ is the domain of $h$, $\mathcal{D}\setminus\mathcal{S}$ is the region of attraction. Based on the safety requirements in Definition~\ref{def:dtcbf}, one can design a safety filter~\cite{hsuSafetyFilterUnified2024} that projects the input $u_\mathrm{p}$ proposed by any controller providing system performance onto the set of safe inputs. Formally, the safety filter is given by
\begin{align}
    u(t) =\ \arg&\min_{u\in\mathcal{U}} \|{u - u_\mathrm{p}(t)} \| \; \label{eq:safety_filter} \\
    \text{s.t. }& h(f(x(t),u)) \leq 0 \text{ if } x(t) \in \mathcal S \nonumber \\
    & h(f(x(t),u)) \leq h(x(t)) -\Delta h(x(t)) \nonumber \text{ else,} \nonumber
\end{align}
where the constraints encode Definition~\ref{def:dtcbf}.

This paper aims to develop a network-level CBF that ensures recoverable overall system safety, as defined in Definition~\ref{def:dtcbf}, without imposing excessive local restrictions. For example, requiring a monotonic decrease $h_\ell(f_\ell(x_{\mathcal{N}_\ell},u_\ell)) - h_\ell(x_\ell) \leq - \Delta h_\ell(x_\ell)$ in local CBFs $h_\ell(\cdot)$ for each subsystem $\ell\in\mathcal{L}$ individually would yield the network-level CBF $h(x)=\sum h_\ell(x_\ell)$, but is often prohibitive in distributed control \cite{conteDistributedSynthesisStability2016}. Such a constraint prevents scenarios in which a safe agent must temporarily sacrifice safety to enable the recovery of a coupled unsafe agent. Consequently, we propose s-CBFs, which allow local increases in constraint violation while maintaining a strictly decreasing network-level CBF, and integrate them in a predictive optimization problem to reduce design conservatism.

\section{DISTRIBUTED PREDICTIVE CONTROL BARRIER FUNCTIONS (D-PCBFs)}
\label{sec:dpcbfs}
This section develops the theoretical D-PCBF framework. We first introduce s-CBFs to allow local relaxations of safety conditions, and then embed them as constraints in a distributed optimization problem to enlarge the safe region.
\subsection{Structured Control Barrier Functions}
\label{sec:sCBFs}
Building on the previous arguments that motivate relaxation of local decrease conditions, we formally define s-CBFs as follows. For notational convenience, we write $\bigtimes_{\ell\in\mathcal{L}} \mathcal{C}_\ell = \mathcal{C}_1\times \ldots \times \mathcal{C}_L$ for a Cartesian product of sets defined by individual sets $\mathcal{C}_\ell$ for $\ell \in\{1,\ldots L\} =\mathcal{L}$.
\begin{definition}[Structured CBFs (s-CBFs)] \label{def:structured_cbf}
    A set of local continuous functions $\{h_{\ell}\}_{\ell=1}^L$, $h_{\ell}:\mathcal{D}_{\ell} \to \mathbb{R}_{\geq0}$, is a set of s-CBFs for system \eqref{eq:dist_system} with local safe sets $\mathcal{S}_{\ell} := \{ x_\ell \;\vert \; h_{\ell}(x_\ell) = 0\}$ and domains $\mathcal{D}_{\ell} := \{ x_\ell \;\vert \; h_{\ell}(x_\ell) \leq \gamma_{f}\}$ (where $\gamma_f > 0$ and $\mathcal{D}_{\ell}$, $\mathcal{S}_{\ell}$ non-empty \& compact), if there exist continuous decrease functions $\Delta h_{\ell}:\mathcal{D}_{\ell}\to\mathbb{R}_{\geq0}$ satisfying $\Delta h_{\ell}(x_\ell)>0$ for all $x_\ell \in \mathcal{D}_{\ell}\setminus\mathcal{S}_{\ell}$ and relaxation functions $\beta_\ell: \bigtimes_{j\in\mathcal{N}_\ell} \mathcal{D}_j \to \mathbb{R}$ such that for all $\ell \in \mathcal{L}$:
    \begin{subequations}\label{eq:scbf_all}
    \begin{align}
        \forall x_{\mathcal{N}_\ell} \in \bigtimes_{j\in \mathcal{N}_\ell} \mathcal{D}_{j}:& \ \inf_{u_\ell \in \mathcal{U}_\ell} h_{\ell}(f_\ell (x_{\mathcal{N}_\ell},\, u_\ell)) - h_{\ell}(x_\ell) \nonumber\\
        & \quad \leq -\Delta h_{\ell}(x_{\ell}) + \beta_{\ell}(x_{\mathcal{N}_\ell}) \label{eq:scbf_relaxed_decrease} \\
        \forall x_{\mathcal{N}_\ell} \in \bigtimes_{j\in \mathcal{N}_\ell} \mathcal{S}_{j}:& \ \inf_{u_\ell \in \mathcal{U}_\ell} h_{\ell}(f_\ell (x_{\mathcal{N}_\ell},\, u_\ell)) = 0 \label{eq:scbf_safe_invariance} \\
        \forall x\in \bigtimes_{j\in\mathcal{L}} \mathcal{D}_{j}:& \ \sum_{\ell=1}^L \beta_{\ell}(x_{\mathcal{N}_\ell}) \leq 0. \quad \label{eq:scbf_overall_decrease}
    \end{align}
    \end{subequations}
\end{definition}
Following Definition~\ref{def:structured_cbf}, the individual s-CBFs are not local CBFs themselves, as the relaxation functions $\beta_\ell$ allow for an increase in s-CBF values. In this case, individual agents can increase their s-CBF values if necessary so that the network-level system eventually becomes safe.

Although the individual s-CBFs are not CBFs for the respective agents, we show that the network-level candidate
\begin{equation}
    h(x) = \sum_{\ell\in\mathcal{L}} h_\ell(x_\ell) \label{eq:candidate_cbf_global}
\end{equation}
is indeed a CBF according to Definition~\ref{def:dtcbf}. To this end, we make the following assumption.
\begin{assumption}[Initial s-CBF upper bound]
\label{ass:ub_levels_structured_cbfs}
    For the initial state $x(0)$ it holds that $\sum_{\ell\in\mathcal{L}} h_\ell(x_\ell(0)) \leq \gamma_f$.
\end{assumption}
This assumption is common in CBF literature and implies that the initial system state is within the region of attraction of the CBF. While this may be restrictive for conservative s-CBF designs that yield small $\gamma_f$, we show that predictive optimization can alleviate this issue. Under Assumption~\ref{ass:ub_levels_structured_cbfs}, we can show that the system state will remain within the global CBF domain for all times.

This leads to the following theorem, which proves that the network-level candidate CBF in \eqref{eq:candidate_cbf_global} is indeed a CBF according to Definition~\ref{def:dtcbf}, defining a control invariant safe set $\mathcal{S}^\mathrm{glob}$ and region of attraction $\mathcal{D}^\mathrm{glob} \setminus \mathcal{S}^\mathrm{glob}$ as follows.
\begin{theorem}[Network-level CBF from s-CBFs]
    \label{thm:global_cbf_from_structured_cbfs}
    Under Assumption~\ref{ass:ub_levels_structured_cbfs}, the function $h(x) = \sum_{\ell=1}^L h_{\ell}(x_\ell)$ constructed with s-CBFs $\{h_{\ell}\}_{\ell\in\mathcal{L}}$ according to Definition~\ref{def:structured_cbf} is a CBF according to Definition~\ref{def:dtcbf} with safe set $\mathcal{S}^{\mathrm{glob}} = \{x \; \vert \; h(x) \leq 0 \}=\bigtimes_{\ell\in\mathcal{L}} \mathcal{S}_\ell$ and domain $\mathcal{D}^\mathrm{glob} = \{ x \; \vert \;  h(x) \leq \gamma_f\}$.
\end{theorem}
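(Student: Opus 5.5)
The plan is to check the ingredients of Definition~\ref{def:dtcbf} for $h=\sum_\ell h_\ell$ on $\mathcal{D}^\mathrm{glob}$, using $\Delta h(x):=\sum_{\ell\in\mathcal{L}}\Delta h_\ell(x_\ell)$.

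\textbf{Set identities, continuity, compactness.} Since every $h_\ell\ge 0$, we have $h(x)\le 0$ if and only if $h_\ell(x_\ell)=0$ for all $\ell$. Hence $\mathcal{S}^\mathrm{glob}=\bigtimes_\ell\mathcal{S}_\ell$, which is nonempty and compact because each $\mathcal{S}_\ell$ is.

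Similarly, $h(x)\le\gamma_f$ forces $h_\ell(x_\ell)\le\gamma_f$ for every $\ell$. Thus $\mathcal{D}^\mathrm{glob}\subseteq\bigtimes_\ell\mathcal{D}_\ell$, on which $h$ is well defined and continuous. It follows that $\mathcal{D}^\mathrm{glob}$ is a closed subset of a compact set, hence compact, and it contains $\mathcal{S}^\mathrm{glob}$. Strict inclusion $\mathcal{S}^\mathrm{glob}\subset\mathcal{D}^\mathrm{glob}$ is meant in the sense of the definition and is not a substantive issue. I will also note that $\mathcal{S}^\mathrm{glob}\subseteq\mathcal{X}$ presupposes $\mathcal{S}_\ell\subseteq\mathcal{X}_\ell$, which is implicit in the setup.

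\textbf{Positivity and continuity of $\Delta h$.} $\Delta h$ is continuous as a finite sum of continuous functions. If $x\in\mathcal{D}^\mathrm{glob}\setminus\mathcal{S}^\mathrm{glob}$, then some $x_\ell\notin\mathcal{S}_\ell$ with $x_\ell\in\mathcal{D}_\ell$. For that $\ell$ we get $\Delta h_\ell(x_\ell)>0$, and all remaining terms are $\ge 0$, so $\Delta h(x)>0$.

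\textbf{Decrease condition.} This is the main step. Fix $x\in\mathcal{D}^\mathrm{glob}$, so that $x_{\mathcal{N}_\ell}\in\bigtimes_{j\in\mathcal{N}_\ell}\mathcal{D}_j$ for every $\ell$. By \eqref{eq:scbf_relaxed_decrease} the infimum over $u_\ell$ satisfies the relaxed bound, so for any $\epsilon>0$ we can pick $u_\ell\in\mathcal{U}_\ell$ with
\[
h_\ell(f_\ell(x_{\mathcal{N}_\ell},u_\ell))-h_\ell(x_\ell)\le-\Delta h_\ell(x_\ell)+\beta_\ell(x_{\mathcal{N}_\ell})+\epsilon/L .
\]
Because each agent's next state depends only on its own input $u_\ell$, these choices can be made independently. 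Concatenating them gives $u\in\mathcal{U}$. Summing over $\ell$ and applying \eqref{eq:scbf_overall_decrease} yields
\[
h(f(x,u))-h(x)\le-\Delta h(x)+\epsilon .
\]
Letting $\epsilon\to0$ gives $\inf_u h(f(x,u))-h(x)\le-\Delta h(x)$ on $\mathcal{D}^\mathrm{glob}\setminus\mathcal{S}^\mathrm{glob}$.

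\textbf{Invariance condition.} For $x\in\mathcal{S}^\mathrm{glob}$ we have $x_{\mathcal{N}_\ell}\in\bigtimes_j\mathcal{S}_j$. Condition \eqref{eq:scbf_safe_invariance} then gives inputs with $h_\ell(f_\ell)\le\epsilon/L$, and summing gives $\inf_u h(f(x,u))\le 0$. One could alternatively use the decrease bound with $\Delta h\ge 0$ and $h(x)=0$, but that requires care since $\Delta h_\ell$ need not vanish on $\mathcal{S}_\ell$. I will therefore use \eqref{eq:scbf_safe_invariance} directly.

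\textbf{Obstacle and the role of Assumption~\ref{ass:ub_levels_structured_cbfs}.} The delicate point is that the s-CBF conditions are only guaranteed on $\bigtimes\mathcal{D}_j$. We must therefore make sure the trajectory never leaves $\mathcal{D}^\mathrm{glob}$, since otherwise the bounds cannot be applied at the next step. Assumption~\ref{ass:ub_levels_structured_cbfs} puts $x(0)\in\mathcal{D}^\mathrm{glob}$. The decrease and invariance just shown give $h(f(x,u))\le h(x)\le\gamma_f$ for the selected inputs (with $\epsilon$ absorbed, or with exact minimizers if the infima are attained via compactness). By induction the state remains in $\mathcal{D}^\mathrm{glob}$, and hence in every $\mathcal{D}_\ell$, for all $t$.

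Some care is needed because the infima are taken separately per agent. If attainment of the minimizers is doubtful, I would keep the $\epsilon$-argument and rely on the infimum statements required by Definition~\ref{def:dtcbf}, which need no attainment.
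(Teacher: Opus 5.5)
Your proposal is correct and follows the same route as the paper's proof: you sum the relaxed local decrease inequalities \eqref{eq:scbf_relaxed_decrease}, cancel the relaxation terms via \eqref{eq:scbf_overall_decrease}, and use \eqref{eq:scbf_safe_invariance} for invariance of the zero-level set. The rest of your argument makes explicit what the paper's one-line sketch leaves implicit: the per-agent $\epsilon$-selection of inputs using the product structure of $\mathcal{U}$, the set identities obtained from $h_\ell\ge 0$, compactness of $\mathcal{D}^\mathrm{glob}$, and positivity of $\Delta h=\sum_\ell\Delta h_\ell$.
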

\begin{proof}
    The claim follows from summing the local s-CBF inequalities: the relaxation terms satisfy $\sum_\ell \beta_\ell \le 0$, yielding decrease of the global candidate $h(x)=\sum_\ell h_\ell(x_\ell)$, while \eqref{eq:scbf_safe_invariance} gives invariance of the zero-level safe set.
\end{proof}
\subsection{Distributed Predictive Control Barrier Function}
\label{sec:DPCBF}
\begin{figure}[tbp]
    \centering
    \includegraphics[width=.8\linewidth]{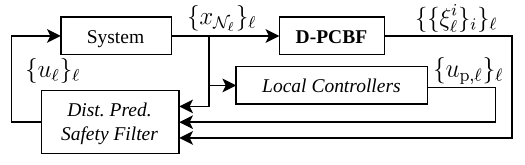}
    \caption{Possible distributed predictive control barrier function (D-PCBF) control loop. Local control policies propose inputs $u_{\mathrm{p},\ell}$ that a distributed model predictive safety filter (see \cite{muntwilerDistributedModelPredictive2020}, but with the same constraint formulation as the D-PCBF) subsequently filters to generate the control inputs $u_\ell$, using the slack values $\xi_\ell^i$ that result from the D-PCBF. The D-PCBF design ensures that the feasible set is forward-invariant and attractive. Instead of using local controllers and a distributed predictive safety filter, distributed MPC can be employed directly by modifying the cost in \eqref{eq:mpsf_optimization_problem}.}
    \label{fig:dpcbf_control_loop_arbitrary_policies}
    \vspace{-1em}
\end{figure}
To enlarge the domain $\mathcal{D}^\mathrm{glob}$ and the safe set $\mathcal{S}^\mathrm{glob}$ of s-CBFs, we define D-PCBFs. By using a model to predict state evolution towards the safe set of the s-CBFs, we can enlarge the effective safe set to be $\mathcal{S}_\mathrm{DPB}\supset\mathcal{S}^\mathrm{glob}$ and the effective domain to be $\mathcal{D}_\mathrm{DPB}\supset \mathcal{D}^\mathrm{glob}$. D-PCBFs thus enable the stabilization of a larger safe set while ensuring its positive invariance under any proposed control policy. This objective is achieved by solving the D-PCBF optimization in parallel to a distributed learning-based controller (e.g., multi-agent reinforcement learning) and using the resulting optimal slack values in a distributed predictive safety filter \cite{muntwilerDistributedModelPredictive2020}. The proposed architecture is shown in Figure~\ref{fig:dpcbf_control_loop_arbitrary_policies}. At each time step $t$, the D-PCBF optimization problem (presented here in its most general form, including the nonlinear dynamics and the general constraints of Section~\ref{sec:prob_form}) is solved as
\begin{subequations}
    \label{eq:dpcbf_optimization_problem}
    \begin{align}
    h_\mathrm{DPB}(x(t))
        &= \min_{\{u_{\ell}^i,\,\xi_{\ell}^i,\,x_{\ell}^i\}}
           \ \sum_{\ell=1}^L \left( \alpha_f \,\xi_\ell^N + \sum_{i=0}^{N-1}\! || \xi_{\ell}^i|| \right)
           \label{eq:dpcbf_obj} \\
    \text{s.t. $\forall\,\ell\in\mathcal{L}$:} \nonumber \\
    x^0_\ell &= x_\ell(t) \label{eq:dpcbf_init}\\
    x_\ell^{i+1} &= f_\ell\big( x_{\mathcal{N}_\ell}^i,\, u_\ell^i\big)
        \quad \forall\, i \in \{0,\dots,N\!-\!1\}
        \label{eq:dpcbf_dyn}\\
    x_\ell^i &\in \mathcal{X}_\ell^i(\xi_\ell^i),
        \quad \forall\, i \in \{0,\dots,N\!-\!1\}
        \label{eq:dpcbf_stage_state}\\
    x_\ell^N &\in \mathcal{S}_{f,\ell}(\xi_\ell^N)
        \label{eq:dpcbf_terminal_state}\\
    u_\ell^i &\in \mathcal{U}_\ell
        \quad \forall\, i \in \{0,\dots,N\!-\!1\}
        \label{eq:dpcbf_input}\\
    \xi_\ell^i &\ge 0
        \quad \forall\, i \in \{0,\dots,N\}.
        \label{eq:dpcbf_slack_nonneg}
    \end{align}
\end{subequations}
Furthermore, the distributed safety filter is given by
\begin{equation}
    \label{eq:mpsf_optimization_problem}
    \begin{aligned}
    \{u_\ell(t)\}_{\ell\in\mathcal{L}}
        &= \arg\min_{\{u_{\ell}^i ,\,x_{\ell}^i \}}
            \sum_{\ell=1}^L || u_{\ell}^0 - u_{\ell}^\mathrm{p}(t)|| \\
    \text{s.t. $\forall\,\ell\in\mathcal{L}$:}\;\;& \eqref{eq:dpcbf_init}, \eqref{eq:dpcbf_dyn}, \eqref{eq:dpcbf_input}\\
    &\eqref{eq:dpcbf_stage_state} \text{ \& } \eqref{eq:dpcbf_terminal_state} \text{ with } \xi_\ell^i = (\xi_\ell^i)^\star.
    \end{aligned}
\end{equation}
\begin{remark}[Multi-objective formulation] 
    Instead of the two-stage optimization, i.e., \eqref{eq:dpcbf_optimization_problem} and \eqref{eq:mpsf_optimization_problem}, a single-stage formulation could enforce a suboptimal MPC-style decrease constraint using a valid warm-start~\cite{didierMultiobjectiveApproachRobust2025}.
\end{remark}
Here, the objective \eqref{eq:dpcbf_obj} is a sum of slacks over all agents, representing the global constraint violation of the multi-agent system over the prediction horizon. As is typical in MPC, the initial state parametrizes the optimization problem via \eqref{eq:dpcbf_init}, and the dynamics \eqref{eq:dpcbf_dyn} propagate until the end of the prediction horizon $N$ based on the input sequence $\{u_\ell^0,\ldots u_{\ell}^{N-1}\}$ for all $\ell\in\mathcal{L}$. For each agent $\ell\in\mathcal{L}$, the set $\mathcal{X}_\ell^i(\xi_\ell^i) = \Big\{ x_\ell \in \mathbb{R}^{n_\ell}\ \big|\  c_\ell(x_\ell) \le \xi_\ell^i - \Delta_i\,\mathbf{1}_{n_{x,\ell}} \Big\}$ represents tightened, slacked state constraints \eqref{eq:dpcbf_stage_state}. The tightening $\Delta_0=0,\Delta_{i+1}>\Delta_i$ is mainly technical and can be chosen arbitrarily small without loss of convergence at the cost of its rate. The set $\mathcal{S}_{f,\ell}(\xi_\ell^N)=\Big\{ x_\ell \in \mathbb{R}^{n_\ell}\ \big|\ h_{f,\ell}(x_\ell) \le \xi_\ell^N \Big\}$ is defined via the function $h_{f,\ell}$ and represents the slacked terminal constraint \eqref{eq:dpcbf_terminal_state}. The slack values are constrained to be nonnegative \eqref{eq:dpcbf_slack_nonneg}. The inputs along the prediction horizon have to satisfy the decoupled input constraints \eqref{eq:dpcbf_input}. From now on, we assume that the terminal constraint functions $h_{f,\ell}$ form a set of structured CBFs according to Definition~\ref{def:structured_cbf}.
\begin{assumption}[Terminal constraints of the D-PCBF]
\label{ass:term_con_structured_cbfs}
    The terminal constraint functions $h_{f,\ell}$ in the optimization problem \eqref{eq:dpcbf_optimization_problem} are a set of s-CBFs according to Definition~\ref{def:structured_cbf} and the corresponding safe set satisfies $\bigtimes_{\ell\in\mathcal{L}} \mathcal{S}_{f,\ell}(0) \subset \bigtimes_{\ell\in\mathcal{L}} \mathcal{X}_\ell^{N-1}(0)$.
\end{assumption}
This construction allows us to prove that the optimal value function of the distributed optimization \eqref{eq:dpcbf_optimization_problem} is a CBF according to Definition~\ref{def:dtcbf}. The following theorem formalizes the theoretical properties of the proposed distributed architecture and provides the desired guarantees for recoverable closed-loop safety.
\begin{theorem}[D-PCBF]
    \label{thm:dpcbf}
    Consider a distributed PCBF $h_\mathrm{DPB}$ as defined by \eqref{eq:dpcbf_optimization_problem} and assume that $\mathcal{U}=\mathcal{U}_1\times \cdots \times \mathcal{U}_L$ and $\mathcal{X}^0(\xi)=\mathcal{X}_1^0(\xi_1) \times \cdots \times \mathcal{X}_L^0(\xi_L)$ are compact for all finite $\xi\geq 0$. Under Assumption~\ref{ass:term_con_structured_cbfs}, the optimization \eqref{eq:dpcbf_optimization_problem} admits an optimal solution, and for a finite, large enough $\alpha_f$, it follows that $h_\mathrm{DPB}$ is a CBF according to Definition~\ref{def:dtcbf} with domain $\mathcal{D}_\mathrm{DPB}:= \{ x \;\vert\; h_\mathrm{DPB}(x) \leq \alpha_f \gamma_f \}\supset \mathcal{D}^\mathrm{glob}$ and safe set $\mathcal{S}_\mathrm{DPB}:= \{ x \;\vert\; h_\mathrm{DPB}(x) =0 \}\supset \mathcal{S}^\mathrm{glob}$.
\end{theorem}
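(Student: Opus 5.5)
The proof follows the single-agent PCBF argument of \cite{wabersichPredictiveControlBarrier2023}, with Theorem~\ref{thm:global_cbf_from_structured_cbfs} used to handle the terminal ingredient at network level. There are four steps: existence of an optimum, the invariance condition on $\mathcal{S}_\mathrm{DPB}$, the decrease condition on $\mathcal{D}_\mathrm{DPB}\setminus\mathcal{S}_\mathrm{DPB}$, and the set inclusions.

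\emph{Existence and regularity.} For every $x(t)$, any input sequence in $\mathcal{U}$ is feasible if the slacks are chosen large enough, because $c_\ell$ and $h_{f,\ell}$ are continuous on the compact reachable sets. So the feasible set is non-empty. The cost is coercive in $\xi$, so the slacks can be restricted to a bounded set. Since $\mathcal{U}$ is compact and $\mathcal{X}^0(\xi)$ is compact for finite $\xi$, the dynamics generate a compact set of predicted states. By Weierstrass a minimizer exists. Continuity of $h_\mathrm{DPB}$ on $\mathcal{D}_\mathrm{DPB}$ would follow from Berge's maximum theorem applied to the parametric problem. Compactness of $\mathcal{D}_\mathrm{DPB}$ follows because it is a closed sublevel set inside $\mathcal{X}^0(\xi)$ for bounded $\xi$. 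The inclusion $\mathcal{D}^\mathrm{glob}\subset\mathcal{D}_\mathrm{DPB}$ follows from a candidate solution: if $h(x)\le\gamma_f$ (with $h=\sum_\ell h_{f,\ell}$), apply the s-CBF inputs of Definition~\ref{def:structured_cbf} for $N$ steps. Theorem~\ref{thm:global_cbf_from_structured_cbfs} gives a decreasing network terminal value, and Assumption~\ref{ass:term_con_structured_cbfs} together with the tightening handles the stage constraints. The resulting cost is at most $\alpha_f\gamma_f$ once the stage-slack contribution is accounted for. The inclusion $\mathcal{S}^\mathrm{glob}\subset\mathcal{S}_\mathrm{DPB}$ uses the same construction with all slacks zero.

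\emph{Invariance on $\mathcal{S}_\mathrm{DPB}$.} If $h_\mathrm{DPB}(x)=0$, all optimal slacks vanish, so $x^N\in\bigtimes_\ell\mathcal{S}_{f,\ell}(0)$. Build the shifted candidate $\{u^1,\dots,u^{N-1},\kappa(x^N)\}$, where $\kappa$ attains \eqref{eq:scbf_safe_invariance} for each agent. The shifted stage $N-1$ lies in $\mathcal{X}^{N-1}(0)$ by Assumption~\ref{ass:term_con_structured_cbfs}. Each shifted earlier stage satisfies the looser tightening $\Delta_{i}<\Delta_{i+1}$. The new terminal state stays in $\bigtimes_\ell\mathcal{S}_{f,\ell}(0)$. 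Hence the candidate has zero cost, and $h_\mathrm{DPB}(f(x,u))=0$.

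\emph{Decrease outside $\mathcal{S}_\mathrm{DPB}$ (main obstacle).} This step is the core and the hardest. Use the same shifted candidate, but now the terminal step uses inputs satisfying \eqref{eq:scbf_relaxed_decrease}. Set each new stage slack $\tilde\xi^i_\ell$ equal to the old slack at index $i+1$, reduced where possible by the tightening margin $\Delta_{i+1}-\Delta_i$. The new terminal slack is set to $\tilde\xi^{N-1}_\ell=\max(0,\text{violation of }x^N)$ and $\tilde\xi^N_\ell=h_{f,\ell}(x^{N+1})$. Summing over agents, \eqref{eq:scbf_overall_decrease} cancels the $\beta_\ell$ terms, so
\[
\textstyle\sum_\ell\tilde\xi^N_\ell\le\sum_\ell\big(\xi^{N\star}_\ell-\Delta h_{f,\ell}(x^N_\ell)\big).
\]
The cost difference is then $-\sum_\ell\|\xi^{0\star}_\ell\|-\alpha_f\sum_\ell\Delta h_{f,\ell}+\sum_\ell\|\tilde\xi^{N-1}_\ell\|$. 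The difficulty is that the new stage-$(N-1)$ slack may be positive whenever $x^N\notin\mathcal{S}_{f}(0)$. This is handled by choosing $\alpha_f$ large. The continuous function $\sum_\ell\|\tilde\xi^{N-1}_\ell\|$ vanishes on $\bigtimes_\ell\mathcal{S}_{f,\ell}(0)$, since that set lies inside $\mathcal{X}^{N-1}(0)$ with margin from the strict inclusion and compactness. It therefore remains to show that it is dominated by $\tfrac{\alpha_f}{2}\sum_\ell\Delta h_{f,\ell}$ on the compact domain. I expect this domination to be the most delicate point. Near the terminal safe set it requires a local comparison between violation and $\Delta h_{f,\ell}$. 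I would settle it by a compactness argument in the style of \cite[Thm.~III.3]{wabersichPredictiveControlBarrier2023}: split into the cases ``$\xi^{0\star}\neq0$'', which gives a strict decrease directly from the dropped first slack, and ``all stage slacks zero but $\xi^N>0$'', where the $\alpha_f\Delta h_f$ term dominates. This gives a continuous positive $\Delta h_\mathrm{DPB}$ on $\mathcal{D}_\mathrm{DPB}\setminus\mathcal{S}_\mathrm{DPB}$, and Definition~\ref{def:dtcbf} holds.
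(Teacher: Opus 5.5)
You follow the same route as the paper, the single-agent PCBF argument with the s-CBFs as terminal ingredient. However, the step you flag as delicate is left open, and the case split you offer does not close it.

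When some $(\xi^0_\ell)^\star\neq0$, dropping $\sum_\ell\|(\xi^0_\ell)^\star\|$ does not give a decrease \emph{directly}. The appended slack $\tilde\xi^{N-1}_\ell=\max(0,c_\ell((x^N_\ell)^\star)+\Delta_{N-1}\mathbf{1})$ depends only on $(x^N)^\star$. That state can be far from $\mathcal{S}^{\mathrm{glob}}$ while the dropped slack is arbitrarily small. So you still need $\sum_\ell\|\tilde\xi^{N-1}_\ell\|\le\frac{\alpha_f}{2}\sum_\ell\Delta h_{f,\ell}((x^N_\ell)^\star)$, and in your second case this domination is simply asserted.

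The missing idea is to use the inclusion in Assumption~\ref{ass:term_con_structured_cbfs} as a uniform margin, i.e.\ $\mathcal{S}^{\mathrm{glob}}\subseteq\mathrm{int}\,\bigtimes_\ell\mathcal{X}^{N-1}_\ell(0)$. By compactness, the appended violation then vanishes on a whole neighborhood $\mathcal{B}$ of $\mathcal{S}^{\mathrm{glob}}$; you only concluded that it vanishes on $\mathcal{S}^{\mathrm{glob}}$ itself. On the compact set $\mathcal{D}^{\mathrm{glob}}\setminus\mathcal{B}$ one has $\sum_\ell\Delta h_{f,\ell}\ge\underline{\delta}>0$ and violation at most $\bar v$. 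Hence every $\alpha_f\ge2\bar v/\underline{\delta}$ works, and no local comparison is needed. If the violation vanished only on $\mathcal{S}^{\mathrm{glob}}$, the domination could fail for every finite $\alpha_f$, e.g.\ with $\Delta h_{f,\ell}$ quadratic but the violation linear in the distance to the safe set.

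You also invoke \eqref{eq:scbf_relaxed_decrease} and \eqref{eq:scbf_overall_decrease} at $(x^N)^\star$ without checking $(x^N)^\star\in\mathcal{D}^{\mathrm{glob}}$. This follows from $h_{f,\ell}((x^N_\ell)^\star)\le(\xi^N_\ell)^\star$ together with $\alpha_f\sum_\ell(\xi^N_\ell)^\star\le h_\mathrm{DPB}(x)\le\alpha_f\gamma_f$. Placing the terminal prediction in $\mathcal{D}^{\mathrm{glob}}$ is exactly what the identity in the paper's proof is used for.

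There are further holes.
\begin{itemize}
\item \emph{Missing case.} Your cases omit the situation where all $(\xi^0_\ell)^\star$ and $(\xi^N_\ell)^\star$ vanish but some $(\xi^i_\ell)^\star\neq0$ with $1\le i\le N-1$. Then $h_\mathrm{DPB}(x)>0$, yet your cost-difference bound evaluates to zero. This is because you discarded the tightening gains $\|(\xi^{i+1}_\ell)^\star\|-\|\max(0,(\xi^{i+1}_\ell)^\star-(\Delta_{i+1}-\Delta_i)\mathbf{1})\|$. These gains are the only source of strict decrease in this case, and the actual reason $\Delta_{i+1}>\Delta_i$ is required.
\item \emph{Continuity of the decrease function.} Your decrease is expressed through optimal slacks, which need not depend continuously on $x$, so a \emph{continuous} $\Delta h_\mathrm{DPB}$ is asserted rather than shown. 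Take $\Delta h_\mathrm{DPB}(x):=h_\mathrm{DPB}(x)-\min_{u\in\mathcal{U}}h_\mathrm{DPB}(f(x,u))$. It is continuous because $h_\mathrm{DPB}$ is continuous and $\mathcal{U}$ is compact, and positive by the pointwise strict decrease.
\item \emph{The inclusion $\mathcal{D}^{\mathrm{glob}}\subset\mathcal{D}_\mathrm{DPB}$.} Assumption~\ref{ass:term_con_structured_cbfs} does not handle the stage constraints here, since $\mathcal{D}_\ell$ need not lie in $\mathcal{X}_\ell$. The bound $\alpha_f\gamma_f$ again requires a large-$\alpha_f$ compactness argument, treating $h(x)\le\gamma_f/2$ and $h(x)\ge\gamma_f/2$ separately.
\end{itemize}

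The paper avoids re-deriving all of this by a reduction. Since $h_{f,\ell}\ge0$, one has $(\xi^N_\ell)^\star=h_{f,\ell}((x^N_\ell)^\star)$. Hence $h_\mathrm{DPB}$ coincides with the centralized PCBF of \cite{wabersichPredictiveControlBarrier2023} whose terminal CBF $\sum_\ell h_{f,\ell}$ is supplied by Theorem~\ref{thm:global_cbf_from_structured_cbfs}. The decrease, the choice of $\alpha_f$ and the continuity are then inherited from the single-agent theorem.
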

\begin{proof}
    The argument follows the PCBF construction of \cite{wabersichPredictiveControlBarrier2023}, with the terminal condition provided by the s-CBFs and the identity $(\xi_\ell^N)^\star = h_{f,\ell}((x_\ell^N)^\star)$, which places the terminal prediction in the recoverable terminal set.
    \vspace{-0.5em}
\end{proof}
Since Theorem~\ref{thm:dpcbf} certifies that the optimal value function of \eqref{eq:dpcbf_optimization_problem} is a CBF according to Definition~\ref{def:dtcbf}, the set $\mathcal{S}_\mathrm{DPB}$ is forward invariant and all trajectories that start in the region $\mathcal{D}_\mathrm{DPB}\setminus\mathcal{S}_\mathrm{DPB}$ converge to $\mathcal{S}_\mathrm{DPB}$. In fact, the proof for convergence only requires the predicted terminal state $(x^N)^\star$ to be in the terminal domain $\mathcal{D}^\mathrm{glob}$, leading to an implicitly defined domain $\{x \mid (x^N)^\star \in \mathcal{D}^\mathrm{glob} \}\supset \mathcal{D}_\mathrm{DPB}$. By performing predictive optimization online, we have effectively enlarged the safe set and the region of attraction defined by the s-CBFs. The resulting certificate, in contrast to the centralized formulation in \cite{wabersichPredictiveControlBarrier2023}, respects the multi-agent structure: state and input constraints are enforced per agent, the terminal safe set is a Cartesian product built from s-CBFs, and the only coupling appears in the dynamics \eqref{eq:dpcbf_dyn}. As a consequence, the construction is amenable to distributed optimization and preserves the communication structure.
\begin{remark}[Disturbance robustness]
    The results above assume perfect modeling. However, the CBF framework naturally yields robustness margins. If disturbances or model error are bounded, the existence of a PCBF implies input-to-state stability \cite{didierApproximatePredictiveControl2026}.
\end{remark}
%

\section{PLUG-AND-PLAY PROTOCOL}
\label{sec:pnp}
In distributed control architectures for multi-agent systems, network topological uncertainty poses a major challenge to safe operation. This section develops a certifiable PnP procedure for topology-varying multi-agent systems controlled via the D-PCBF mechanism introduced in Section~\ref{sec:dpcbfs}. Even when individual agents violate constraints immediately after a PnP operation, the proposed procedure guarantees that the network converges to a safe operating state. The protocol performs a single D-PCBF solve with re-synthesized terminal s-CBFs (for the affected neighborhood), and makes an accept/reject decision based on two certificates: (i) a violation certificate (using predetermined bounds on allowed state constraint violations) ensuring that temporary violations remain within user-specified bounds, and (ii) a recovery certificate ensuring convergence back to the safe set $\mathcal{S}_\mathrm{DPB}$.
\subsection{Plug-and-Play Setup}
We adopt the general problem setting for dynamics and constraints of Section~\ref{sec:prob_form}. Moreover, each agent defines a maximum violation set for state constraints as follows.
\begin{definition}[Maximum violation sets]
    \label{def:viol-limits}
    Each agent $\ell \in\mathcal{L}$ defines constraint violation tolerances $v_{\max,\ell}\in\mathbb R_{\ge 0}^{n_{x,\ell}}$ (limits for temporary state constraint violations) that give rise to the sets
    \begin{equation}
        \mathcal{V}_\ell := \{x_\ell \mid c_\ell(x_\ell) \leq v_{\max,\ell}\} \label{eq:max_viol_sets}
    \end{equation}
    that contain all states that are temporarily admissible.
\end{definition}
This agent-based definition of admissible constraint violations allows for individual levels of risk aversion. We assume here that the s-CBF domains are contained within the predefined maximum violation sets $\mathcal{V}_\ell$.
\begin{assumption}[Max. violation in $\mathcal{D}_{f,\ell}$]
    \label{as:max_constr_violation_in_scbf_domain}
    The maximum constraint violation in each agent's domain $\mathcal{D}_{\ell}$ satisfies
    \begin{equation}
        \max_{x_\ell \in \mathcal{D}_{\ell}} c_{\ell,k}(x_\ell) \leq v_{\max,\ell,k} \quad \forall k \in [n_{x,\ell}], \quad \forall \ell\in\mathcal{L}. \nonumber
    \end{equation}
    Recall: $n_{x,\ell}$ is the number of state constraints for agent $\ell$.
\end{assumption}
This assumption forces the recoverable s-CBF domain to respect the user-defined limits ($\mathcal{D}_\ell \subseteq \mathcal{V}_\ell$), ensuring that safety recovery never yields unacceptable violations.
Given the fixed bounds $v_{\max,\ell,k}$, the constraint in Assumption~\ref{as:max_constr_violation_in_scbf_domain} can be enforced in the synthesis of s-CBFs for linear systems in \eqref{eq:max_safeset_synthesis_opt} without loss of convexity.
\subsection{Certificates and Algorithm}
Given Assumption~\ref{as:max_constr_violation_in_scbf_domain}, we introduce Algorithm~\ref{alg:pnp} to perform the PnP procedure online. The re-synthesis step for the new, proposed network in Line~\ref{alg:pnp_resynthesis} (also see \eqref{eq:max_safeset_synthesis_opt}) can be restricted to a neighborhood of subsystems affected by the plug-in or plug-out operation, or performed offline before operation if there are only a limited number of possible network topologies and agent dynamics configurations. Solving the D-PCBF in Line~\ref{alg:pnp_dpcbf_solve} (also see \eqref{eq:dpcbf_optimization_problem}) involves solving an online distributed optimization problem that depends on the current state. The recovery certificate in Line~\ref{alg:pnp_recovery_certificate} checks whether the optimal terminal slacks are upper-bounded by $\gamma_f$, and is justified by Theorem~\ref{thm:dpcbf} to verify whether the global terminal predicted state is in the region of attraction of the terminal s-CBFs. It involves a global summation that requires a predefined global protocol but can be performed within the connected communication graph. The violation certificate in Line~\ref{alg:pnp_violation_certificate} ensures recursive feasibility, in that predicted states over the horizon will never leave the maximum violation sets \eqref{eq:max_viol_sets} for future time steps.
\begin{algorithm}[htbp]
    \caption{PnP procedure with D-PCBF certificates}
    \label{alg:pnp}
    \begin{algorithmic}[1]
    \State Request: $\mathcal{L} \rightarrow \mathcal{L}_\mathrm{new}$ (new total set of agent indices)
    \State Re-synthesize $h_{f,\ell}$ for $\ell \in \mathcal{L}_\mathrm{new}$ \label{alg:pnp_resynthesis}
    \State Solve D-PCBF for $\{(\xi_\ell^{i})^{\star}\}_{\ell\in\mathcal{L}_\mathrm{new},\,i\in\{0,...,N\}}$ \label{alg:pnp_dpcbf_solve}
    \State \textbf{Recovery Certificate:} Compute and check whether
    \[
        \sum_{\ell\in\mathcal{L}_\mathrm{new}} (\xi_\ell^{N})^{\star} \leq \gamma_f.
    \] \label{alg:pnp_recovery_certificate}
    \State \textbf{Violation Certificate:} For $\ell\in\mathcal{L}_\mathrm{new}$, $i\in[N-1]$, check 
    \[
        \max(0,\, (\xi_\ell^{i})^{\star} - \Delta_i \mathbf{1}) \leq v_{\max,\ell}.
    \] \label{alg:pnp_violation_certificate}
    \State \textbf{If} both certificates pass: accept, \textbf{else}: reject.
    \end{algorithmic}
\end{algorithm}
\begin{remark}[Distributed computation of recovery certificate]
    The global summation check in Line~\ref{alg:pnp_recovery_certificate} is required only once per change of network topology. It can be computed via standard consensus protocols or hop-by-hop propagation~\cite{lynchDistributedAlgorithms1996}. Alternatively, one may enforce conservative local bounds $(\xi_{\ell}^{N})^{\star} \le \gamma_{f}/L_\mathrm{new}$ through division by the new total number of agents $L_\mathrm{new}$ to avoid the global consensus problem.
\end{remark}

\section{IMPLEMENTATION AND RESULTS}
\label{sec:implementation_and_results}
In this section, we will first show methods of how to synthesize s-CBFs for linear distributed systems. Then, we will show results of using the D-PCBF framework in simulation and on real miniature cars.
\subsection{Synthesis of s-CBFs for Linear Systems}
\label{sec:sCBFs_synthesis}
In this section, we describe distributed methods to synthesize s-CBFs. We focus here on two constructive synthesis approaches for linear systems subject to polytopic state and input constraints: one in which the origin is the safe set, and one in which agents compute ellipsoidal safe sets. While the former approach follows from synthesizing structured control Lyapunov functions as in \cite{conteDistributedSynthesisStability2016}, it yields the smallest safe set. On the other hand, the latter yields possibly larger safe sets, but requires an iterative construction method due to nonconvexity. While it is possible to synthesize barrier functions for general nonlinear systems (e.g., via sum-of-squares techniques or learning-based methods \cite{dawsonSafeControlLearned2023}), this remains computationally intensive compared with the following scalable, efficient synthesis for linear systems.

We consider here linear time-invariant multi-agent systems, with the $\ell$-th agent's dynamics written as
\begin{equation}
    x_\ell(t+1) = A_{\mathcal{N}_\ell} x_{\mathcal{N}_\ell}(t) + B_\ell u_\ell(t) \quad \forall \ell \in\mathcal{L}, \label{eq:lin_sys_dist_synthesis}
\end{equation}
where $A_{\mathcal{N}_\ell}\in\mathbb{R}^{n_\ell\times n_{\mathcal{N}_\ell}}$, and $B_\ell \in\mathbb{R}^{n_\ell \times m_\ell}$. The state and input constraint sets are formally written for all $\ell\in\mathcal{L}$ as
\begin{align}
    \mathcal{X}_\ell = \{x_\ell \mid H^x_\ell x_\ell \leq h^x_\ell \} \ \text{ and } \ 
    \mathcal{U}_\ell = \{u_\ell \mid H^u_\ell u_\ell \leq h^u_\ell \}, \nonumber
\end{align}
where $H^x_\ell \in\mathbb{R}^{n_{x,\ell}\times n_\ell}$, $h^x_\ell \in\mathbb{R}^{n_{x,\ell}}$, $H^u_\ell \in\mathbb{R}^{n_{u,\ell}\times m_\ell}$, and $h^u_\ell \in\mathbb{R}^{n_{u,\ell}}$ for all $\ell\in\mathcal{L}$. We assume that all constraint sets contain the origin in their interior, i.e., $0\in \mathrm{int}\, \mathcal{X}_\ell$ and $0\in\mathrm{int} \, \mathcal{U}_\ell$.
Motivated by standard synthesis procedures for quadratic Lyapunov functions \cite{boydLinearMatrixInequalities1994}, the s-CBFs are parametrized as
\begin{equation}
    h_{\ell}(x_\ell) = \max(0, x_\ell^\top P_{\ell} x_\ell - \gamma_{x,\ell}) \quad \forall \ell \in\mathcal{L}, \label{eq:structured_cbf_lin_candidates}
\end{equation}
with matrices $P_{\ell}=P_{\ell}^\top \succ 0$ of appropriate dimension and nonnegative scalars $\gamma_{x,\ell}$. Given this parametrization, the local safe sets and domains follow as
\begin{align}
    \mathcal{S}_{\ell} &= \{ x_\ell \;\vert\; x_\ell^\top P_{\ell} x_\ell \leq \gamma_{x,\ell} \} \nonumber \\
    \mathcal{D}_{\ell} &= \{ x_\ell \;\vert\; x_\ell^\top P_{\ell} x_\ell \leq \gamma_{x,\ell} + \gamma_f \}. \nonumber
\end{align}
For the s-CBF synthesis, we aim to leverage linear state-feedback control policies, as is common in the design of Lyapunov functions for linear systems. Since each agent can leverage information from neighboring agents, the considered local policies are given by the matrices $K_{\mathcal{N}_\ell}$, and the corresponding control action is
\begin{equation}
    u_\ell = K_{\mathcal{N}_\ell} x_{\mathcal{N}_\ell}. \label{eq:structured_cbf_lin_control_candidate}
\end{equation}
To facilitate notation, we define selection matrices from network-level to local and neighboring states for each agent $\ell\in\mathcal{L}$ as $T_\ell \in\mathbb{R}^{n_\ell \times n}$ and $W_\ell\in\mathbb{R}^{n_{\mathcal{N}_\ell} \times n}$, respectively.
\subsubsection{Synthesis with Origin as Safe Sets}
Using the origin as the safe set is equivalent to setting $\gamma_{x,\ell}=0$ for all $\ell \in \mathcal {L}$. We choose a constant $\gamma_f = 1$, as the matrices $P_\ell$ can be rescaled, and use the function candidates $\beta_\ell(x_{\mathcal{N}_\ell}) = x_{\mathcal{N}_\ell}^\top \Gamma_{\mathcal{N}_\ell} x_{\mathcal{N}_\ell}$ and $\Delta h_\ell(x_\ell) = x_\ell^\top D_\ell x_\ell$ for all $\ell \in \mathcal{L}$ where $\Gamma_{\mathcal{N}_\ell}^\top = \Gamma_{\mathcal{N}_\ell}$ and $D_\ell^\top = D_\ell \succ 0$. To satisfy the s-CBF conditions in Definition~\ref{def:structured_cbf}, a distributed semidefinite program can be solved. This approach is similar to the synthesis of terminal control Lyapunov functions in \cite{conteDistributedSynthesisStability2016}. While \cite{conteDistributedSynthesisStability2016} synthesizes time-varying terminal safe sets, we design time-varying regions of attraction. Condition \eqref{eq:scbf_relaxed_decrease} becomes a Lyapunov decrease LMI, \eqref{eq:scbf_safe_invariance} is satisfied by construction, and \eqref{eq:scbf_overall_decrease} can be enforced via coupled structured LMIs with block-diagonal matrices $S_{\mathcal{N}_\ell}$ that bound $\Gamma_{\mathcal{N}_\ell}$. However, unlike \cite{conteDistributedSynthesisStability2016}, we only enforce input and not state constraints for all states in $\mathcal{D}_\ell$, as the recoverable region is not required to be within the state constraints $\mathcal{X}_\ell$. The synthesis optimizes for $\delta_\ell^\mathrm{o} = \{E_\ell, Y_{\mathcal{N}_\ell}, S_{\mathcal{N}_\ell}, \Gamma_{\mathcal{N}_\ell}, D_{\ell} \}$, where $E_\ell=P_\ell^{-1}$ and $Y_{\mathcal{N}_\ell}=K_{\mathcal{N}_\ell}E_{\mathcal{N}_\ell}$. The matrix $E_{\mathcal{N}_\ell}=\mathrm{blkdiag}_{j\in\mathcal{N}_\ell}(E_j)$ is simply the block-diagonal matrix of neighboring $E_j$.
\subsubsection{Synthesis with Ellipsoidal Safe Sets}
Aiming to enlarge the safe sets, we propose using the s-CBF candidates \eqref{eq:structured_cbf_lin_candidates} with $\gamma_{x,\ell}>0$. In this case, we choose $\gamma_{x,\ell}=1$ and the function candidates $\beta_\ell (x_{\mathcal{N}_\ell}) = \sum_{j\in\mathcal{N}_\ell} b_{\ell,j} h_j(x_j)$ and $\Delta h_\ell(x_\ell) = \rho_\ell h_\ell(x_\ell)$ with some constant $\rho_\ell>0$ for all $\ell\in\mathcal{L}$. For the resulting s-CBFs to satisfy the conditions of Definition~\ref{def:structured_cbf}, we compute the local decision variables $\delta_\ell^\mathrm{e} = \{E_\ell, \{b_{\ell,j}\}_{j\in\mathcal{N}_\ell}, Y_{\mathcal{N}_\ell}\}$ and the network-level decision variable $\gamma_f$ by solving the following optimization problem, where $E_\ell = P_\ell^{-1}$ and $Y_{\mathcal{N}_\ell} = K_{\mathcal{N}_\ell} E_{\mathcal{N}_\ell}$ are defined as in the previous subsection,
\begin{subequations}
    \label{eq:max_safeset_synthesis_opt}
    \begin{align}
        \min_{\{ {\delta_\ell^\mathrm{e}} \}_{\ell \in \mathcal{L}}, \gamma_f} & -\sum_{\ell=1}^L \log\det({E_\ell}) \\
        \textrm{s.t.}& \;\forall \ell\in\mathcal{L}: \nonumber \\
        & \quad {E_\ell} \succ 0,\; \rho_\ell > 0, \\
        & \quad b_{\ell,j} \geq 0 \, \forall j \in\mathcal{N}_\ell \setminus \{\ell\}, \; 1 - \rho_\ell + b_{\ell,\ell} \geq 0, \label{eq:syn_elip_relaxed_decrease_0}\\
        & \quad \sum_{j\in\mathcal{N}_\ell} b_{\ell,j} \leq \rho_\ell, \label{eq:syn_elip_relaxed_decrease_1} \\
        & \quad M_\ell := \sum_{j\in\mathcal{N}_\ell} b_{\ell,j} W_\ell T_j^\top E_j T_j W_\ell^\top, \nonumber \\
        & \quad \begin{bmatrix}
            {(1-\rho_\ell) \cdot \bar{E}_\ell} + M_\ell & {E_{\mathcal{N}_\ell}} A_{\mathcal{N}_\ell}^\top + {Y_{\mathcal{N}_\ell}}^\top B_\ell^\top \\
            A_{\mathcal{N}_\ell} {E_{\mathcal{N}_\ell}} + B_\ell {Y_{\mathcal{N}_\ell}} & {E_\ell}
        \end{bmatrix} \nonumber \\
        &\quad \quad \succeq 0, \label{eq:syn_elip_relaxed_decrease_2}\\
        & \quad \sum_{j\in\mathcal{N}_\ell} b_{j,\ell} \leq 0, \label{eq:syn_elip_globalrelax} \\
        & \quad \begin{bmatrix}
            (h^u_{\ell,i})^2/(\vert{\mathcal{N}_\ell}\vert (1+\gamma_f)) &  (H^u_{\ell,i}) {Y_{\mathcal{N}_\ell}} \\
            { Y_{\mathcal{N}_\ell}}^\top (H^u_{\ell,i})^\top & {E_{\mathcal{N}_\ell}}
        \end{bmatrix} \nonumber \\
        & \quad \quad \succeq 0 \quad \forall i \in [n_{u,\ell}], \label{eq:syn_elip_inputcon}\\
        & \quad (H^x_{\ell,i}) E_\ell  (H^x_{\ell,i})^\top \leq (h^x_{\ell,i})^2 \quad \forall i \in [n_{x,\ell}], \label{eq:syn_elip_statecon}
    \end{align}
\end{subequations}
where $\bar{E}_\ell = W_\ell T_\ell^\top E_\ell T_\ell W_\ell^\top$ and the other variables are as defined above. Inequalities \eqref{eq:syn_elip_relaxed_decrease_0}, \eqref{eq:syn_elip_relaxed_decrease_1}, and \eqref{eq:syn_elip_relaxed_decrease_2} ensure conditions \eqref{eq:scbf_relaxed_decrease} and \eqref{eq:scbf_safe_invariance}. Inequality \eqref{eq:syn_elip_globalrelax} ensures condition \eqref{eq:scbf_overall_decrease}. Finally, \eqref{eq:syn_elip_inputcon} and \eqref{eq:syn_elip_statecon} ensure state and input constraint satisfaction, respectively. The proof of sufficiency is omitted here for brevity. To solve the non-convex (bilinear) problem \eqref{eq:max_safeset_synthesis_opt} approximately, we employ a distributed block-coordinate descent (alternating minimization) scheme \cite{iwasakiXYcentringAlgorithmDual1995}. By fixing the variables $\left( \{\{b_{\ell,j}\}_{j\in\mathcal{N}_\ell}\}_\ell, \gamma_f \right)$, the problem becomes convex in $\left(\{E_\ell\}_\ell, \{Y_{\mathcal{N}_\ell}\}_\ell \right)$ and by fixing $\left(\{E_\ell\}_\ell\right)$, the problem is convex in $\left( \{\{b_{\ell,j}\}_{j\in\mathcal{N}_\ell}\}_\ell, \{K_{\mathcal{N}_\ell}\}_\ell, (1+\gamma_f)^{-1} \right)$. Crucially, because the constraints respect the network topology \cite{conteDistributedSynthesisStability2016}, each of these LMI subproblems can be solved via distributed optimization techniques such as the alternating direction method of multipliers (ADMM)~\cite{boydDistributedOptimizationStatistical2010}, enabling a fully distributed synthesis procedure.
\subsection{Simulation Results and Experiments}
\label{sec:results}

We validate the proposed D-PCBF architecture using a vehicle platooning example. The validation proceeds in three stages: s-CBF synthesis for the longitudinal model, simulations to assess scalability, and hardware experiments on miniature race cars to demonstrate real-world robustness.
\subsubsection{Car Platoon Synthesis}
\label{sec:results_platoon_synth}
\begin{figure}[htbp]
    \centering
    \includegraphics[width=.87\linewidth]{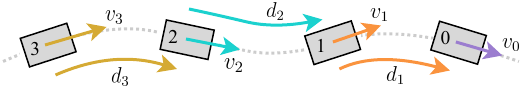}
    \vspace{-0.3em}
    \caption{Platoon example system. Assuming perfect lateral tracking and a kinematic model, the distributed dynamics consider the arc length to the preceding vehicle and the velocity magnitude.}
    \label{fig:platoon}
    \vspace{-0.7em}
\end{figure}
The platoon is modeled as a double integrator with purely longitudinal dynamics, assuming perfect lateral tracking (Fig.~\ref{fig:platoon}). The state of the $\ell$-th vehicle evolves according to discretized dynamics \cite{alamFuelefficientDistributedControl2011}, with sampling time $\mathrm{d}t=0.1\,\mathrm{s}$,
\begin{align}
    \Tilde{d}_\ell(t+1) &= \Tilde{d}_\ell(t) - \mathrm{d}t\,\Tilde{v}_\ell(t) 
    + \mathrm{d}t\,\Tilde{v}_{\ell-1}(t), \label{eq:platoon_distance_dyn}\\
    \Tilde{v}_\ell(t+1) &= \Tilde{v}_\ell(t) + \mathrm{d}t\,a_\ell(t), \label{eq:platoon_velocity_dyn}
\end{align}
where $\Tilde{d}_\ell$ and $\Tilde{v}_\ell$ are errors relative to the reference distance $d_\mathrm{ref}=1.0\,\mathrm{m}$ and velocity $v_\mathrm{ref}=1.0\,\mathrm{m}/\mathrm{s}$. 
We constrain inputs $a_\ell \in [-5.0, 5.0]\,\mathrm{m}/\mathrm{s}^2$ and states $d_\ell \in [0.5, 1.5]\,\mathrm{m}$, $v_\ell \in [0.5, 1.5]\,\mathrm{m}/\mathrm{s}$. Agent $\ell=0$ has no distance state \eqref{eq:platoon_distance_dyn}. 

We first approximately solve \eqref{eq:max_safeset_synthesis_opt} to maximize the safe set. As shown in Fig.~\ref{fig:platoon_synthesis}, the alternating optimization (cf. Section~\ref{sec:sCBFs_synthesis}) successfully expands the volume of the safe set measured by $\log\det(E)$ and the terminal region of attraction measured by $\gamma_f$.
\begin{figure}[tbp]
    \centering
    \includegraphics[width=.83\linewidth]{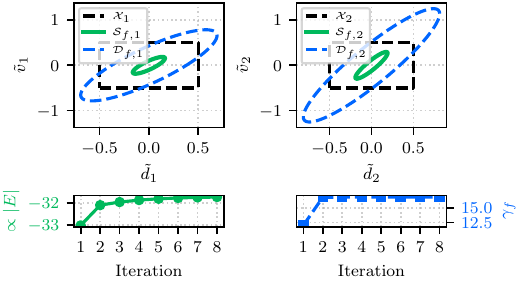}
    \vspace{-0.5em}
    \caption{Synthesis results for the vehicle platoon model when maximizing the safe set for $L=5$ agents. The top plots show the resulting s-CBF safe sets (green) and domains (blue) for the second and third car of the platoon. The bottom plots visualize the alternating optimization results described in Section~\ref{sec:sCBFs_synthesis} to approximately solve \eqref{eq:max_safeset_synthesis_opt}. The safe set size is plotted as the log-determinants of $E$, i.e., $\log\det(E)=\sum_\ell \log\det(E_\ell) \propto \vert E\vert $, while $\gamma_f$ is used to indicate the size of the terminal region of attraction.}
    \label{fig:platoon_synthesis}
    \vspace{-1em}
\end{figure}
\subsubsection{Scalability Analysis (Simulation)}
\label{sec:results_platoon_sim}
To evaluate the framework's scalability, we simulate a scenario in which an adversarial controller commands a constant, unsafe acceleration $a_\ell^\mathrm{p} = 10.0\, \mathrm{m}/\mathrm{s}^2$ for all agents. To filter these inputs, we employ the control architecture presented in Fig.~\ref{fig:dpcbf_control_loop_arbitrary_policies}. We use the origin as a safe set and synthesize the s-CBFs according to Section~\ref{sec:sCBFs_synthesis}. The platoon is initialized in an unsafe state, with vehicles 1 and 2 at zero distance to their predecessors (i.e., the cars are in contact) and all other relative states set to zero. Constraints and dynamics are the same as previously described. We set $\alpha_f=10^3$, $\Delta_i=10^{-3}i$, and $N=10$.

As shown in Fig.~\ref{fig:platoon_simulation} (top), the filter mechanism successfully intervenes, overriding the unsafe commands to recover safe spacing. We compare the solve times of distributed and centralized implementations (Fig.~\ref{fig:platoon_simulation}, bottom). MOSEK \cite{mosek} is used as the centralized solver and for the distributed agent-level problems. To solve (12) in a distributed manner, we implement a custom solver based on ADMM~\cite{boydDistributedOptimizationStatistical2010}. The problem is decomposed into local subproblems, in which agents optimize local trajectories and copies of coupled neighbor states, subject to consensus constraints. We use warm-starting to improve convergence speed, and the ADMM procedure for \eqref{eq:dpcbf_optimization_problem} converges in an average of 15 iterations. Since the distributed algorithm is executed on a sequential CPU, we evaluate it using an idealized parallel runtime, computed by summing, over ADMM iterations, the maximum local subproblem solve time and the maximum consensus and dual update times across agents, yielding a communication-free lower bound on the runtime of a fully parallel implementation. While the centralized solver's computation time grows with network size, the idealized distributed runtime remains approximately constant, indicating favorable scalability of the distributed formulation.
\begin{figure}[tbp]
    \centering
    \includegraphics[width=.77\linewidth]{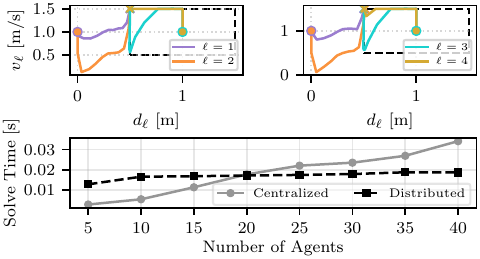}
    \caption{Simulation results for the vehicle platoon model. The top plots show the state space trajectories (start=dot, end=cross) of the distributed simulation for the first four follower agents, where agents $\ell=1,2$ are initialized at an unsafe distance. The top-left plot shows results for 5 agents, and the right plot shows results for 40 agents. In both cases, the network-level system converges to safe operation within the state constraints, although all agents try to accelerate with $a_\ell^\mathrm{p}=10.0\, \mathrm{m}/\mathrm{s}^2$. The bottom plot shows the solve times of \eqref{eq:dpcbf_optimization_problem} for the centralized solver and the distributed implementation, where the latter is reported as an idealized parallel runtime.}
    \label{fig:platoon_simulation}
    \vspace{-1em}
\end{figure}
\subsubsection{Real-time Miniature Race Car Platoon}
\label{sec:results_platoon}
In the previous simulation, the closed-loop system was entirely simulated with perfect model knowledge. To provide deeper insight into the inherent robustness of the D-PCBF formulation and to demonstrate its potential for online PnP operation, we conduct a vehicle platoon experiment within the miniature race car framework \cite{carronChronosCRSDesign2023}. A decentralized longitudinal controller proposes unsafe inputs, which are subsequently filtered using the D-PCBF and the predictive safety filter control loop shown in Figure~\ref{fig:dpcbf_control_loop_arbitrary_policies}. Local lateral pure-pursuit controllers are assumed to provide safe steering inputs that keep the cars on a given reference track. The time step is chosen to be $\mathrm{d}t=0.1\,\mathrm{s}$, and we set $v_\mathrm{ref}=0.5\mathrm{m}/\mathrm{s}$ and $d_\mathrm{ref}=1.0\mathrm{m}$. The constraints are now $a_\ell \in [-0.5, 1.5]\,\mathrm{m}/\mathrm{s}^2$, $d_\ell \in [0.8, 1.4]\,\mathrm{m}$, and $v_\ell \in [0.1, 0.9]\,\mathrm{m}/\mathrm{s}$. The leader's acceleration $a_0$ is further limited to $25\%$ of the follower's maximum acceleration to induce unsafe platoon situations by artificially limiting the leader's ability to accelerate away from the approaching follower.

To synthesize the terminal s-CBFs, we use the origin as the safe set and perform the associated synthesis optimization described in Section~\ref{sec:sCBFs_synthesis}. For the experiment, we again employ the control architecture presented in Figure~\ref{fig:dpcbf_control_loop_arbitrary_policies}. To demonstrate the framework's inherent robustness, we select an unsafe PID controller that aims to exceed the speed limit at $1.5\mathrm{m}/\mathrm{s}$, thereby creating unsafe driving conditions. This controller proposes acceleration inputs $a_\ell^\mathrm{p}$, which are filtered by the proposed D-PCBF. The D-PCBF parameters are set to $\alpha_f=10^{4}$, $N=40$, and $\Delta_i=10^{-3}i$. The closed loop with D-PCBF and safety filter runs at $100\,\mathrm{Hz}$, and we use HPIPM QCQP \cite{frisonIntroducingQuadraticallyconstrainedQuadratic2022} as a solver for both.

\begin{figure*}[tbp]
    \centering
    \includegraphics[scale=0.87]{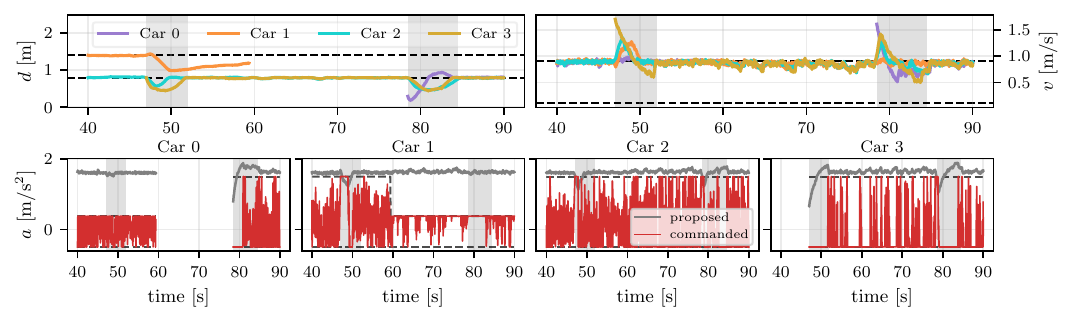}
    \vspace{-0.4cm}
    \caption{State and input signals of the platoon experiment. The experiment starts with the configuration $0\leftarrow1\leftarrow2$. It consists of the following events: Car 3 approaches with high speed from behind and plugs in at $t\approx47s$. Then, the leading car 0 plugs out at $t\approx 59s$. Finally, at $t\approx78s$, this car approaches with high speed from behind and plugs in again. It can be seen that, in both plug-in scenarios, temporary violations of velocity and distance constraints occur, since the cars have limited acceleration. Although the infeasible commands proposed by the unsafe PID controllers (gray signals) would violate state constraints, the framework prevents unsafe operation whenever possible by applying the filtered commands (red signals). Moreover, it enables safety recovery in cases of constraint violations (gray shaded times). The leader vehicle is constrained to a smaller maximum acceleration to generate visible distance-constraint violations. }
    \label{fig:platoon_experiment}
    \vspace{-1em}
\end{figure*}
We perform a high-speed join maneuver at the tail of the platoon. This setup replicates an \emph{end-of-queue scenario}, a critical safety case in traffic management \cite{khanIntelligentInfrastructurebasedQueueend2007}, in which a vehicle approaches a formed platoon with significant relative velocity, requiring the platoon to temporarily reduce relative distances below the safety threshold. A platoon consisting of cars 0, 1, and 2 is launched to drive a rectangular track with rounded corners. At some time, car 3 is launched to approach the end of the platoon at high speed. Shortly before a collision at $t\approx47s$, a plug-in is performed. Subsequently, car 0 performs a plug-out at $t\approx 59s$ and stops to the side of the track. It is then launched again, approaching the end of the platoon at high speed to perform another plug-in at $t\approx78s$, repeating car 3's earlier high-speed approach. 

The resulting state and input trajectories are shown in Figure~\ref{fig:platoon_experiment}, where it can be seen that temporary violations of the distance and velocity constraints (shaded in gray) cannot be avoided. Yet the closed-loop system converges to safe operation, indicating that selecting the origin as the safe set is sufficient for the chosen horizon length. This experiment highlights the effectiveness of the distributed formulation: While a distributed MPC \cite{conteDistributedSynthesisStability2016} or a standard predictive safety filter \cite{muntwilerDistributedModelPredictive2020} would become infeasible, the proposed approach guarantees convergence to safe operation. Furthermore, conducting these experiments with fully decentralized controllers (e.g., PID) would likely result in collisions because the approaching vehicle cannot decelerate sufficiently due to actuator constraints. Note that we disable the PnP protocol (Algorithm~\ref{alg:pnp}) and admit all PnP requests to allow extreme scenarios and demonstrate the robustness of D-PCBFs. While the recovery (Line~\ref{alg:pnp_recovery_certificate}) and violation certificate (Line~\ref{alg:pnp_violation_certificate}) would have held in both plug-in scenarios, we did not enforce Assumption~\ref{as:max_constr_violation_in_scbf_domain} in the synthesis of the s-CBFs. Although we have omitted Algorithm~\ref{alg:pnp} here, it would allow us to accommodate varying levels of risk aversion and formally guarantee safety recovery.

In addition to this experiment, which involved high-speed plug-in behavior, we have performed plug-ins from standstill and plug-outs of cars at various positions within the platoon. Videos of the experiments accompany this manuscript.\footnote{Available at \url{https://youtu.be/Anr12ZRGJuM} and\\ \url{https://youtu.be/Btexrte7vWQ}.}
\begin{remark}[Distributed implementation]
    In these hardware experiments, because of the small number of agents, centralized solvers were used to validate the \emph{formulation}. In a real-world deployment, the achievable control frequency would depend on solver convergence speed and network bandwidth, a typical trade-off in distributed optimization~\cite{tsianosCommunicationComputationTradeoffs2012}.
\end{remark}
%
%

\section{CONCLUSION}
\label{sec:conclusion}
This paper proposes a distributed safety-filter framework based on predictive control barrier functions for multi-agent systems. A distributed optimization combines local s-CBFs into a terminal constraint, and the resulting optimal value function serves as a CBF for the overall network. This construction yields network-level forward invariance and convergence to a safe set after temporary constraint violations. It further underpins a plug-and-play protocol that uses a slack-based admissibility check based solely on neighbor information, enabling safe admission and removal of agents without requiring preparatory system behavior. Validation on a simulated and a physical vehicle platoon demonstrates recovery to safety, sustained invariance under coupling, and safe plug-and-play behavior, indicating the practicality of the approach for scalable safety certification. Current limitations include qualitative treatment of robustness, potentially conservative s-CBF synthesis for linear polytopic systems, and the restriction to coupling via dynamics. Future work includes deploying distributed solvers on hardware to assess latency and network effects, designing less conservative (e.g., time-varying) terminal sets to enlarge safe regions, establishing robustness guarantees for bounded uncertainties, and extending the framework to coupled state constraints.

\vspace{-0.2cm}









\bibliography{references}
\bibliographystyle{ieeetr}

\end{document}